\theoremstyle{plain}
\newtheorem{thm}{Theorem}
\theoremstyle{definition}
\begin{document}
\preprint{MIT-CTP/4803}
\title{Resource Destroying Maps}
\author{Zi-Wen Liu}\email{zwliu@mit.edu}
\affiliation{Center for Theoretical Physics and Department of Physics, Massachusetts Institute of Technology, Cambridge, Massachusetts 02139, USA}
\author{Xueyuan Hu}
\affiliation{School of Information Science and Engineering, and Shandong Provincial Key Laboratory\\ of Laser Technology and Application, Shandong University, Jinan 250100, China}
\author{Seth Lloyd}
\affiliation{Department of Mechanical Engineering, Massachusetts Institute of Technology, Cambridge, Massachusetts 02139, USA}

\date{\today}

\begin{abstract}
  Resource theory is a widely applicable framework for analyzing the
physical resources required for given tasks, such as computation, 
communication, and energy extraction. 
In this Letter,
we propose a general scheme for analyzing resource 
theories based on resource destroying maps, which leave 
resource-free states unchanged but erase the resource stored in 
all other states. 
We introduce a group of general conditions that determine whether a quantum operation exhibits typical resource-free properties in relation to a given resource destroying map.
Our theory reveals fundamental 
connections among basic elements of resource theories, in
particular, free states, free operations, and resource measures.
In particular, we define a class of simple resource measures that can
be calculated without optimization, and that are monotone nonincreasing 
under operations that commute with the resource destroying map. 
We apply our theory to the resources of coherence and quantum correlations
(e.g., discord), two prominent features of nonclassicality. 
  
\end{abstract}

\maketitle

\emph{Introduction.|}Resource theory originates from the 
observation that certain properties of physical systems become valuable 
resources when the operations that can be performed are restricted so that such properties are hard to create. 
A prototypical example of such a property is quantum entanglement 
\cite{entrev,virmani}, which becomes a key resource for many 
quantum information processing tasks, when one is restricted to local 
operations and classical communication (LOCC). 
The framework of resource theory has been applied to various other 
concepts in quantum information, such as purity \cite{purityrt}, 
magic states \cite{magic}, and coherence 
\cite{baumgratz,winteryang}, and to broader areas, such as 
asymmetry \cite{noether} and thermodynamics \cite{thermo}. 

Theories of different resources share a similar structure. In general, 
quantum resource theories contain three basic elements: free 
states, free (allowed) operations, and resource measures (monotones). 
 These elements are closely related to one another. 
For example, free operations should not be able to create 
resource from free states, and resource measures are expected to be monotone nonincreasing under free operations.
In recent years, considerable effort has been devoted to developing a 
unified framework of resource theories \cite{bg,qrtq,Coecke}. 
In particular, Ref.\ \cite{bg} studies the general case where 
the set of free operations is maximal, i.e., all (asymptotically) 
resource nongenerating operations are allowed, and when the resource 
satisfies several postulates (e.g., the set of free states is convex).

Some key aspects of resource theories are not addressed 
by existing frameworks, however. For example, characterizing a proper set of free operations is frequently a major difficulty in establishing a resource theory, and we do not yet have general principles and understandings for nonmaximal theories. Indeed, a successful resource theory is usually specified by physical restrictions on the set of allowed operations: LOCC and thermal operations \cite{janzing,nanothermo,thermo} are prominent examples. But such restrictions are often stronger than merely nongenerating, and may lead to mathematical difficulties in characterizing and calculating monotones.  Moreover, existing results do not apply to some 
resources, such as discord, where the set of free states is nonconvex. 

In this Letter, we introduce a simple but 
universally applicable theory of resource-free properties of quantum 
operations that addresses these issues. 
Our theory is based on the notion of resource destroying maps: for a 
given resource, a resource destroying map leaves free states unchanged, 
but destroys the resource otherwise. Key features of resource destroying 
maps are discussed.  For example, an immediate observation is that a 
resource destroying map is not linear (thus cannot be represented by a 
quantum channel) if the set of free states is nonconvex. As will be seen, many important properties of our framework sharply contrast linear resource destroying maps with nonlinear ones.
We demonstrate that the concept of resource destroying maps
helps unify and simplify the analysis of resource theories,
allowing us to determine whether a quantum operation exhibits a group of
fundamental resource-free properties, in addition to nongenerating. 
A basic result of our theory is that any contractive distance 
between a state and its resource-free version is monotone 
nonincreasing under all such operations.
Finally, we apply the framework of resource destroying maps to 
coherence and discord. In particular, we find that the theory of discord, which is poorly understood in terms of resource theory (largely due to its nonconvexity), can exhibit a simple structure in this framework. 
Moreover, the analysis of discord helps illustrate several peculiar properties of nonlinear resource destroying maps.

\emph{Resource destroying maps.|}Here we formally define the notion 
of \emph{resource destroying maps}, the key concept of our theory. 
Let $F$ be the set of free states for a certain theory. For all 
input states $\rho$, a resource destroying map $\lambda$ satisfies 
the following requirements: 
(i) resource destroying: if $\rho\not\in F$, $\lambda(\rho)\in F$; 
(ii) nonresource fixing: if $\rho\in F$, $\lambda(\rho)=\rho$. 
In other words, a resource destroying map outputs 
a free state if the input is not free, and leaves the input 
unchanged otherwise. 
The resource destroying map characterizes the resource-free
space: 
$F$ consists precisely of the fixed points of $\lambda$.
Resource destroying maps are idempotent due to (ii). They are also surjections onto 
codomain $F$ since every free state is a preimage of itself.
It is helpful to draw an analogy between the structure of resource 
destruction and a fiber bundle: $\lambda$ defines a bundle projection 
onto $F$. Call a nonfree state a \emph{parent state} of its image 
free state. Then each free state defines a \emph{family} consisting 
of corresponding parent states (the fiber) and the free state itself.

Note that a resource destroying map does not have to be completely
positive or linear, and can be highly nonuniform. 
However, we are mostly interested in the physically motivated maps, 
usually with simple descriptions that work universally for all inputs. 
For example, the simplest case is when the resource destroying map
can be represented by a quantum channel. 
However, it can be shown that $\lambda$ cannot be linear (thus not a channel) when $F$ is nonconvex. (See Supplemental
Material \footnote{See Supplemental Material [url], which includes Refs.\ \cite{vedralplenio,time,refrmp,asymgour,2016arXiv161007504B,wilde,watrous,guohou,shorppt}} for details.)
In addition, for theories of correlations among multiple parties, 
local resource destroying maps cannot be a channel either. 
Notably, entanglement breaking channels \cite{entbreak} do not 
necessarily leave separable (unentangled) states unchanged, 
and so are not entanglement destroying maps.   Consider uncorrelated 
states: the channel that stabilizes all local states can only be the 
identity, which does not destroy resource. 
Necessary and sufficient conditions for the existence of resource destroying channels are recently investigated in Ref.\ \cite{singleshot}.

For many theories, a simple resource destroying map is easy to identify. 
For example, complete dephasing in the
preferred basis is an obvious coherence destroying map;
Haar (uniform) twirling over the group $G$ is a $G$-asymmetry destroying map \cite{clock}.
For discord-type quantum correlations, the resource destroying 
map cannot be a channel (whether local or not) since discord-free 
(classically correlated) states form a nonconvex set \cite{modi}, but
it can simply be a local measurement in an eigenbasis of the 
reduced density operator.
In the following, we use upper and lower case Greek letters to 
denote channels and general maps, respectively.

\emph{Resource-free conditions.|}\label{conditions}Now we are ready to introduce a group of general conditions with simple mathematical forms, based on resource destroying maps, which correspond to various typical resource-free properties of quantum operations. 

Consider a theory with resource destroying map $\lambda$. Let $\mathcal{E}$ be some quantum operation.
We start from
\begin{equation}
    \mathcal{E}\circ\lambda=\lambda\circ\mathcal{E}\circ\lambda,
\end{equation}
where $\circ$ is the composition of maps. An equivalent form of this condition is the following: $\mathcal{E}(\lambda(\rho))=\lambda(\mathcal{E}(\lambda(\rho)))$ for all $\rho$. 
Recall that only free states are fixed points of $\lambda$. This condition indicates that the output of $\mathcal{E}\circ\lambda$ is always a fixed point of $\lambda$, thus free. In other words, the set of free states is closed under $\mathcal{E}$.
So we call this condition the {\it nongenerating condition}, and, correspondingly, the operations satisfying this condition {\it resource nongenerating operations}. This is a necessary constraint on free operations, since any other operation can create resource, thus trivializing the theory. Theories that allow all such operations (under some assumptions including convexity) possess a common structure: they are reversible and have regularized relative entropy as the unique monotone asymptotically \cite{bg,qrtq}.

Next, we consider the following dual form of the nongenerating condition:
\begin{equation}
    \lambda\circ\mathcal{E}=\lambda\circ\mathcal{E}\circ\lambda.
\end{equation}
Think of the output of $\lambda$ as the free part of an input state.
This condition means that $\mathcal{E}$ cannot make use of the 
resource stored in any input to affect the free part. We call this condition the {\it nonactivating condition}.
An alternative interpretation is that 
such operations never break up a family: 
members of the same family must be mapped to the same target family (not necessarily the original one though). 
An illustration of the nongenerating and nonactivating conditions is given in Fig.\ \ref{cond}.
\begin{figure}
\includegraphics[width=0.56\columnwidth]{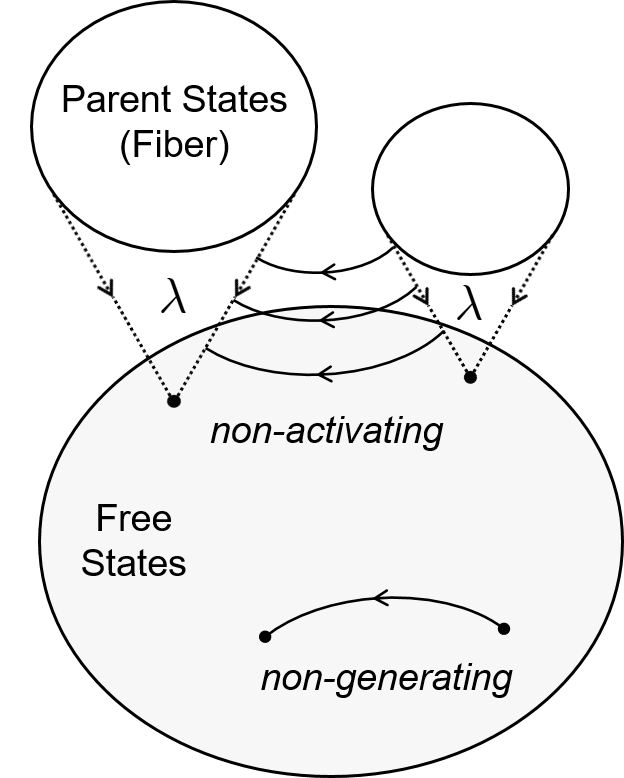}
\caption{\label{cond}An illustration of the resource-free conditions. The set of free states is closed under resource nongenerating operations. States belonging to the same family are mapped to the same target family by resource nonactivating operations.}
\end{figure}

In general, the nongenerating and nonactivating conditions can hold independently.
Because of the idempotence of $\lambda$, the sufficient and necessary condition for an
operation to be resource nongenerating and nonactivating 
simultaneously is that it commutes with $\lambda$:
\begin{equation}
    \lambda\circ\mathcal{E}=\mathcal{E}\circ\lambda.
\end{equation}
We call this condition the {\it commuting condition}.

Recall that a quantum operation $\mathcal{E}$ can be specified by 
Kraus decomposition $\mathcal{E}(\cdot)=\sum_\mu K_\mu\cdot K_\mu^\dagger$, 
where $\{K_\mu\}$ are Kraus operators satisfying 
$\sum_\mu K_\mu^\dagger K_\mu\leq I$. Each Kraus arm 
$\mathcal{E}_\mu(\cdot)\equiv K_\mu\cdot K_\mu^\dagger$ corresponds to a 
(unnormalized) generalized measurement outcome with probability 
${\rm tr}(K_\mu\cdot K_\mu^\dagger)$. In practice, one may want to require 
that the nongenerating, nonactivating, or commuting conditions be 
satisfied even when considering selective measurements, i.e., the 
outcome of the measurement is accessible. This leads to the following 
modification of each condition: there is some Kraus decomposition 
of $\mathcal{E}$ such that all $\mathcal{E}_\mu$ satisfies the condition. 
We call such counterparts \emph{selective conditions}. 
In other words, selective operations can be implemented by some 
POVM that exhibits corresponding resource-free properties, even if 
measurement outcomes are retained. Here we do not impose these 
conditions on every Kraus decomposition: typically, the relevant 
decomposition is specified by how we implement the operation, and this can be an overly strong requirement that places extra constraints irrelevant to the resource under study \footnote{In the context of coherence, this issue was touched upon by Ref.\ \cite{genuine}: all Kraus decompositions of Genuinely Incoherent Operations are incoherent, but they are not even capable of mapping an incoherent state to another as a result.}. 
We shall compare the strength of the original conditions and their selective counterparts in the next section.

For a given resource-free set $F$,
the definition of $\lambda$ is in general nonunique. 
Since $\lambda$ is surjective, 
the set of resource nongenerating operations is not affected by 
different choices of $\lambda$. In contrast, resource nonactivating operations and thus commuting operations can depend on the bundle structure specified by $\lambda$. 
These observations also hold for the selective version of each condition.
Explicit examples are given in the Supplemental Material \cite{Note1}. 

\emph{General properties.|}Here we introduce some typical features of our framework that hold generally in different theories. We shall see that some of these features manifestly contrast linear resource destroying maps with nonlinear ones. Denote the sets of resource nongenerating, nonactivating and commuting operations as $\bar{\mathbb{X}},\bar{\mathbb{X}}^\ast$ and $\mathbb{X}$, respectively, and their selective versions by subscript $s$. By definition, they satisfy $\mathbb{X}=\bar{\mathbb{X}}\cap\bar{\mathbb{X}}^\ast$ and $\mathbb{X}_s=\bar{\mathbb{X}}_s\cap\bar{\mathbb{X}}^\ast_s$.

For a theory with resource destroying channel $\Lambda$, one can easily construct these operations. Notice that $\Lambda\circ\Omega\in \bar{\mathbb{X}}$,
where $\Omega$ is an arbitrary operation, by the idempotence of 
$\Lambda$. Meanwhile, $\Lambda\circ\Omega$ belongs to 
$\bar{\mathbb{X}}^\ast$ only if $\Omega$ itself does. Similarly, 
$\Omega\circ\Lambda\in \bar{\mathbb{X}}^\ast$. 
Destroying the resource in both the input and output allows both conditions to be satisfied: $\Lambda\circ\Omega\circ\Lambda \in \mathbb{X}$. Selective operations can be constructed by similar procedures on each Kraus arm. Let $\{M_\mu\}$ be a Kraus decomposition of $\Omega$, and $\Omega_\mu(\cdot)\equiv M_\mu\cdot M_\mu^\dagger$ denote the action of each Kraus arm. It can be directly verified that each $\Lambda\circ \Omega_\mu$ specifies a resource nongenerating Kraus arm, i.e., $\sum_\mu\Lambda\circ \Omega_\mu\in\bar{\mathbb{X}}_s$. Similarly, $\sum_\mu\Omega_\mu\circ\Lambda\in\bar{\mathbb{X}}^\ast_s$ and $\sum_\mu\Lambda\circ\Omega_\mu\circ\Lambda\in\mathbb{X}_s$.

One may also ask if the resource-free properties hold for compositions and convex combinations. 
The answer is Yes for compositions for any $\lambda$.
For example, $\mathbb{X}$ is obviously closed under composition: 
given two operations $\mathcal{E}_1$ and $\mathcal{E}_2$ satisfying $\mathcal{E}_{1,2}\circ\lambda=\lambda\circ\mathcal{E}_{1,2}$ for some resource destroying map $\lambda$, it holds that $\mathcal{E}_2\circ\mathcal{E}_1$ is also a $\lambda$-commuting operation: by using the respective commuting conditions, we obtain $(\mathcal{E}_2\circ\mathcal{E}_1)\circ\lambda=\mathcal{E}_2\circ\lambda\circ\mathcal{E}_1=\lambda\circ(\mathcal{E}_2\circ\mathcal{E}_1)$. This conclusion also holds for $\bar{\mathbb{X}}$, $\bar{\mathbb{X}}^\ast$, and selective classes, which can be proven by similar arguments.
On the other hand, all classes are closed under convex 
combination when $\lambda$ is a linear map. 
Again, take the commuting condition as an example: 
$(p\mathcal{E}_1+(1-p)\mathcal{E}_2)\circ\lambda=p\mathcal{E}_1\circ\lambda+(1-p)\mathcal{E}_2\circ\lambda=p\lambda\circ\mathcal{E}_1+(1-p)\lambda\circ\mathcal{E}_2=\lambda\circ(p\mathcal{E}_1+(1-p)\mathcal{E}_2)$. Similar arguments work for other conditions. For nonlinear $\lambda$, however,
the last equality does not necessarily hold. 
For the same reason, when $\lambda$ is linear, selective conditions are stronger than their respective original versions (e.g., $\mathbb{X}_s\subset\mathbb{X}$), but otherwise this is not necessarily true.

We now show that the commuting condition plays a special role in the quantification of resources, a central theme of resource theories. 
The most basic property of a proper resource measure 
(a non-negative real function of states) is monotonicity under 
free operations: free operations should not be able to increase the 
amount of resource.
A natural type of measure is the minimal distance to the set of 
free states, where the distance is given by some function 
$D(\rho,\sigma)$ defined on two states $\rho$ and $\sigma$ that is 
contractive, i.e., obeys the data processing inequality 
$D(\Gamma(\rho),\Gamma(\sigma))\leq D(\rho,\sigma)$ for any 
operation $\Gamma$.   
Note that $D$ is not necessarily a metric. Nonsymmetric 
distances such as relative R\'enyi entropies are also valid choices of $D$. 
Formally, a distance measure of resource is given by  
$\mathfrak{D}(\rho):=\inf_{\sigma\in F}D(\rho,\sigma)$.
Monotonicity holds for such measures due to the minimization. However, such optimizations 
are often computationally hard. 
Now consider the following function:
\begin{equation}\label{simpled}
    \tilde{\mathfrak{D}}(\rho):=D(\rho,\lambda(\rho)).
\end{equation}
Because of the absence of minimization, $\tilde{\mathfrak{D}}(\rho)\geq \mathfrak{D}(\rho)$. However, if we restrict the set of allowed operations to $\mathbb{X}$, this measure also satisfies the monotonicity requirement: 
\begin{eqnarray}\label{mono}
\tilde{\mathfrak{D}}(\rho)&\geq& D(\Gamma(\rho),\Gamma(\lambda(\rho)))\nonumber\\&=&D(\Gamma(\rho),\lambda(\Gamma(\rho)))\equiv\tilde{\mathfrak{D}}(\Gamma(\rho)),
\end{eqnarray}
where the inequality follows from the contractivity of $D$. 
Therefore, for any resource theory with free operations satisfying the commuting condition, we have a class of computationally easy monotones which avoid optimizations (given that $\lambda$ is suitably defined). 
We should note that $\tilde{\mathfrak{D}}$ is not necessarily continuous everywhere when $\lambda$ is nonlinear, which requires more careful analysis in practice (as will be demonstrated for discord). 
The possibility of retaining measurement outcomes leads to the {\it selective monotonicity} condition---monotonicity under selective measurements on average. Following a similar argument as Eq.\ (\ref{mono}), a general result we can obtain at the moment is that $\tilde{\mathfrak{D}}$ obeys selective monotonicity under selective commuting operations, for a restricted class of $D$ including quantum relative entropy (details in the Supplemental Material \cite{Note1}).
Recall that, when $\lambda$ is linear, $\mathbb{X}_s\subset\mathbb{X}$: selective monotonicity is stronger than monotonicity; however, this is not necessarily the case when $\lambda$ is nonlinear.

\emph{Examples.|}We first focus on the theory of quantum coherence.   
Here, a basis of interest is specified, and density operators that are 
diagonal in this basis are incoherent (free). 
The study of coherence from a resource theory perspective has attracted a considerable amount of attention and effort in recent years. A few definitions of coherence-free operations stemmed from various perspectives are proposed and studied lately \cite{baumgratz,winteryang,speed,genuine,powerg,sio,speakable,examination,hung}, most of which can directly emerge from our framework as follows. 
 Complete dephasing in the preferred basis, denoted by $\Pi$, is a natural coherence-destroying map. Let $\bar{X}(\Pi)$ and $\bar{X}^\ast(\Pi)$ and $X(\Pi)$ be the sets of coherence nongenerating, nonactivating and commuting operations given by $\Pi$, respectively (an additional subscript $s$ for selective operations).
 $\bar{X}(\Pi)$ contains all coherence nongenerating operations, which are recently analyzed in Ref.\ \cite{hung}. 
Members of $\bar{X}^\ast(\Pi)$ cannot activate the coherence stored in the input in the sense that $\mathcal{E}(\cdot)$ and $\mathcal{E}\circ\Pi(\cdot)$ are always indistinguishable by measuring incoherent observables.
So $X(\Pi)$ contains operations that can neither create nor activate coherence.
In the preparation of this Letter, we became aware that 
these operations were very recently studied as dephasing-covariant operations in Refs.\ \cite{speakable,examination}. 
$\bar{X}_s(\Pi)$ and $X_s(\Pi)$ are respectively the sets of Incoherent Operations \cite{baumgratz} and Strictly Incoherent Operations \cite{sio}.
Detailed discussions of these classes and further comparisons to 
other relevant proposals of coherence-free operations are provided in 
the Supplemental Material \cite{Note1}. For any theory where the free operations 
belong to $X(\Pi)$, we know that $D(\cdot,\Pi(\cdot))$ for any 
contractive $D$ represents a coherence monotone. In comparison, 
monotonicity of some $D$ may fail if more operations are allowed. For example, not all relative R\'enyi entropies are monotone under $\bar{X}(\Pi)$ \cite{examination}.

Next, we consider discord \cite{zurek,hv}, the most general form of 
nonclassical correlations; see Ref.\ \cite{modi} for a comprehensive review.
Discord places a stronger constraint on free states than entanglement in 
the sense that it can exist in separable states. Discord 
has been shown to be the underlying resource for various tasks 
\cite{dqc1,rsp,qc,qcrypt}. However, a formal treatment of discord in 
the resource theory framework (e.g., transformation rules) remains elusive, mostly because our understanding of discord-free operations is 
limited, and most existing general results for resource theory do 
not directly apply to discord, due to its nonconvexity. 
 Here, we focus on the one-sided discord as measured on subsystem $A$ of a bipartite state $\rho_{AB}$, and local operations acting on the same subsystem.  The ideas can be generalized 
to nonlocal operations and multipartite cases.
 A state is regarded as discord-free if there exist local rank-one 
projective measurements that do not perturb the joint state. 
Such states take the form 
$
    \rho_{AB}=\sum_i p_i  |i\rangle_A\langle i|\otimes\rho_B^i,
$
where $\{|i\rangle\}$ is a complete orthonormal basis of $A$. 
These states are conventionally called classical-quantum (CQ) states. Because of the nonconvexity of CQ, discord can be created just by mixing, and discord destroying maps cannot be linear.
Suppose the local density operator 
$\rho_A={\rm tr}_B\rho_{AB}$ admits a spectral decomposition $\rho_A=\sum_i p_i |i\rangle\langle i|$. Then
\begin{equation}
   \pi_A(\rho_{AB}) := \sum_i (|i\rangle_A\langle i|\otimes I_B)\rho_{AB}(|i\rangle_A\langle i|\otimes I_B),
\end{equation}
i.e., a local measurement in an eigenbasis of $A$,
is the most natural discord destroying map. 
Obviously, $\pi_A$ is nonlinear and thus not a channel: the basis in which the projection
takes place is dependent on the input state, and not uniquely defined within degenerate subspaces. Also note that $\pi_A$ never changes the local states.

We now plug $\pi_A$ into the conditions. Let $\mathcal{E}_A$ be a 
local operation acting on $A$. Note that we are considering the 
effect on the joint space: For example, the nongenerating 
condition reads 
$(\mathcal{E}_A\otimes I_B)\circ\pi_A=\pi_A\circ(\mathcal{E}_A\otimes I_B)\circ\pi_A$. 
This condition determines whether an operation always maps a 
CQ state to another.  As opposed to entanglement, discord 
can be created by certain local operations.
Such operations have been studied in Refs.\ \cite{hucpc,local}. 
$\bar{X}_A^\ast(\pi_A)$ and $X_A(\pi_A)$ have not been considered before to our knowledge.  
We can classify a variety of simple quantum operations according to
their behaviors in the theory of $\pi$ as follows (proofs in the Supplemental Material \cite{Note1}). Local unitary-isotropic channels (mixture of a unitary channel and depolarization, which are intuitively strongly discord-free) indeed belong to 
$X_A(\pi_A)$ and $X_{s,A}(\pi_A)$. Rank-one projective measurements, however, are in 
$\bar{X}_{s,A}(\pi_A)\backslash X_A(\pi_A)$. 
Furthermore, local mixed-unitary channels belong to all selective classes, but some of them are not in the original classes, supporting our general observation that selective conditions are not necessarily stronger than their original counterparts for nonlinear $\lambda$.  

As shown earlier, contractive distances between any 
$\rho_{AB}$ and $\pi_A(\rho_{AB})$, e.g., $S(\rho_{AB}||\pi_A(\rho_{AB}))$, is monotone under $X_A(\pi_A)$ (including all unitary-isotropic channels), and selectively monotone under $X_{s,A}(\pi_A)$ (including all mixed-unitary channels). 
This quantity is equivalent to a physically motivated 
simple measure of discord called diagonal discord \cite{lloydheat}.
(Similar quantities are independently discussed in 
Refs.\ \cite{rajagopal,mor,mid,unified,localdemon}.) 
Diagonal discord may suffer from discontinuities 
(infinitesimal perturbations may lead to a sudden jump 
in the value of diagonal discord) \cite{wpm,criteria}; 
however, it can be shown that they only occur at degeneracies \cite{zwl}.

Reference \cite{meznaric} adopts an approach similar to the idea of resource
destroying maps to study nonclassicality of operations. 
There, operations that commute with einselection 
\cite{zurekein} (complete dephasing) in a certain basis are regarded 
as classical.  
The key difference between the setup of Ref.\ \cite{meznaric} and 
the discord theory discussed here is that the basis for einselection 
needs to be specified; thus, not all discord-free states 
are the fixed points of such einselection \footnote{A simple consequence of this difference is that a local unitary can have infinite noncommutativity with the einselection (as measured by relative entropy) in the framework of Ref.\ \cite{meznaric}, as well as in our coherence theory, but it is always a $\pi$-commuting in our discord theory.}. Ref.\ \cite{meznaric} is more about local coherence in some preferred basis rather than discord.

\emph{Concluding remarks.|}In this Letter, we propose a simple framework 
for resource theories based on the notion of 
resource destroying maps. 
Our theory provides a general scheme for understanding the power of 
quantum operations in relation to certain resources.   The theory shows how to extend results that have been
previously derived for specific resources to a more general
class of resource theories.
In particular, our framework may lead to conceptual advances in 
understanding nonconvex theories such as discord.  
It would also be interesting to apply the framework of resource destroying
maps to other important resource theories, such as those of entanglement, magic states, 
asymmetry and thermodynamics.  

\emph{Note added.|}During the final revision of this Letter, we became aware of a recent review on discord \cite{adesso}, which includes a detailed discussion of the importance and difficulties of studying discord under the resource theory framework, and the state of the art of this field (in particular the local commutativity-preserving operations as the maximal set of local free operations).

\begin{acknowledgments}
{\it Acknowledgements.|}ZWL and SL are supported by AFOSR and ARO. XH is supported by NSFC under Grant No.\ 11504205.
We thank Can Gokler, Iman Marvian, Peter Shor, Kevin Thompson, Yechao Zhu, and anonymous referees for helpful discussions.
\end{acknowledgments}


%

\widetext
\appendix

\title{Supplemental Material}

\section{GENERAL ASPECTS OF RESOURCE DESTROYING MAPS}
Here we provide detailed discussions of general properties of the 
theory of resource destruction.  Specifically, we show that the 
convexity of the set of free states is a necessary condition for 
an associated resource destroying map to be linear, and analyse 
the robustness of the resource-free conditions.

\subsection{Linearity of resource destroying map}
We prove the following result that 
relates the convexity of the theory and the linearity of a resource destroying map:
\begin{thm}
Let $S$ be a set of states. Suppose $S$ is not convex, then there does not exist a linear map that stabilizes the states in $S$ only, i.e., no linear map $\lambda$ satisfies $\lambda(\rho)=\rho$ for all $\rho\in S$, and $\lambda(\sigma)\neq\sigma$ for all $\sigma\not\in S$.
\label{convex}
\end{thm}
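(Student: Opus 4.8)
The plan is to prove the contrapositive: if a linear map $\lambda$ has the property that its fixed points among states are \emph{exactly} the elements of $S$, then $S$ must be convex. This immediately yields the theorem, since nonconvexity of $S$ then rules out the existence of any such $\lambda$.

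First I would fix the ambient space on which $\lambda$ acts. A resource destroying map sends states to states, so I regard $\lambda$ as (the restriction of) a linear map on the real vector space of Hermitian operators; linearity then means $\lambda(aX+bY)=a\lambda(X)+b\lambda(Y)$ for all Hermitian $X,Y$ and real $a,b$. The conceptual reason behind the claim is that the set of fixed points $\{X : \lambda(X)=X\}=\ker(\lambda-\mathrm{id})$ is a linear subspace, hence convex; intersecting it with the convex body of density operators can only produce a convex set.

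The main step is then purely elementary. Take any $\rho_1,\rho_2\in S$ and any $p\in[0,1]$, and set $\rho:=p\rho_1+(1-p)\rho_2$. Since the density operators form a convex set, $\rho$ is again a valid state. By linearity,
\begin{equation}
\lambda(\rho)=p\,\lambda(\rho_1)+(1-p)\,\lambda(\rho_2)=p\,\rho_1+(1-p)\,\rho_2=\rho,
\end{equation}
using that $\rho_1,\rho_2$ are fixed points of $\lambda$. Thus $\rho$ is itself a fixed point of $\lambda$. By the defining hypothesis that the fixed states of $\lambda$ are precisely the elements of $S$ (every $\sigma\notin S$ satisfies $\lambda(\sigma)\neq\sigma$), we conclude $\rho\in S$. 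Since $\rho_1,\rho_2\in S$ and $p$ were arbitrary, $S$ is convex.

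The only point requiring care is the interpretation of ``linear'': I must ensure that the convex combination $\rho$ lies in the domain on which linearity is asserted, and that $\lambda$ genuinely acts linearly on such combinations of states rather than merely being affine or additive on some restricted class of inputs. This is guaranteed once $\lambda$ is viewed as a linear map on the Hermitian operators, of which all states and their convex combinations are elements. Notably, no positivity, trace preservation, or complete positivity of $\lambda$ is needed; linearity alone forces the fixed-point set, intersected with the convex set of states, to be convex, which is the entire content of the claim.
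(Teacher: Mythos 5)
Your proposal is correct and is essentially the paper's own argument in contrapositive form: the core step in both is applying linearity to a convex combination $p\rho_1+(1-p)\rho_2$ of fixed states and invoking the hypothesis that $\lambda(\sigma)\neq\sigma$ for $\sigma\notin S$ (the paper derives a contradiction from a nonconvex combination, you conclude convexity of the fixed-point set). Your added care about the ambient space of Hermitian operators is a reasonable clarification but does not change the route.
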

\begin{proof}
By the nonconvexity of $S$, one can always find $\rho_1,\rho_2\in S$ such that $p\rho_1+(1-p)\rho_2\not\in S$ for some probability $p$. Suppose there is such a linear $\lambda$. Then $\lambda(p\rho_1+(1-p)\rho_2)=p\lambda(\rho_1)+(1-p)\lambda(\rho_2)=p\rho_1+(1-p)\rho_2$ by plugging in the property that $\lambda(\rho)=\rho$ when $\rho\in S$. This contradicts the other defining property of $\lambda$ that $\Lambda(\sigma)\neq\sigma$ when $\sigma\not\in S$.
\end{proof}
Recall that only free states are stabilized by a resource destroying map.
Therefore, for a theory with nonconvex $F$, no linear map can satisfy both requirements on all inputs. This implies that nonconvex theories do not admit resource destroying channels. 

\subsection{Robustness of resource-free conditions}
When the set of free states $F$ is not a singleton, i.e., contains more than one element, 
the definition of resource destroying map $\lambda$ is not unique. 
This leads to the question of whether different choices of $\lambda$ 
for a given $F$ define different resource nongenerating, 
nonactivating and commuting conditions (and their selective versions).

As mentioned in the main text, the resource nongenerating condition is robust: 
since $\lambda$ is always surjective onto $F$ by the nonresource-fixing 
requirement, this condition exclusively selects out the operations 
under which $F$ is closed. Here we show that, in contrast, resource 
nonactivating operations and thus commuting operations can depend on 
$\lambda$. 
For example, consider a peculiar $\lambda$ that maps all 
$\rho\not\in F$ to a particular $\rho_0\in F$. In other words, 
$\rho_0$ and all states outside $F$ form a family and all other 
free states are ``orphans'' without any parent states. If an 
operation does not stabilize all free states, then it has to 
map all states to the same image (a free state) to satisfy the 
nonactivating condition. Such a requirement is clearly stronger 
than the general case. 
An explicit example in the context of coherence is as follows. We define 
an extreme coherence-destroying map that takes all coherent states 
to one incoherent state $\rho_0$, while all other incoherent states are orphans.
Under this map, a partial depolarizing channel fails the 
nonactivating condition 
since it 
maps $\rho_0$ to $\tau\rho_0+(1-\tau)I/d$, which is still incoherent, 
while any coherent state remains coherent, thus always mapped to 
$\rho_0$. However, the partial depolarizing channel
obviously satisfies the (selective) nonactivating condition with $\Pi$. 
In addition, if we restrict the input to be free states, the 
resource nongenerating condition allows the same set 
of operations, but the nonactivating condition trivially holds for all
channels. Combining all these observations, we see that only 
the nongenerating condition is robust under the choice of
resource destroying map. 
Note that the above robustness results hold for the selective version of each condition.

It would be interesting to study resource non-activation
under appropriate restrictions on the choices for $\lambda$. 
For example, one could require that no orphans exist, 
or consider a class of resource destroying maps instead of a 
single one. As usual, we are primarily interested in physically 
motivated definitions. Despite the fragility of the 
resource nonactivating and commuting conditions under 
variations of resource destroying maps and input states, these 
conditions are still meaningful and nontrivial for physical 
definitions of $\lambda$.

\subsection{Selective monotonicity}

We say a resource measure $f(\rho)$ exhibits selective monotonicity 
if it is monotone nonincreasing under selective measurements on average. 
That is, $f(\rho)\geq \sum_\mu p_\mu f(\mathcal{E}_\mu(\rho))$ 
where $p_\mu = {\rm tr}\,\mathcal{E}_\mu(\rho)$.
Now consider a valid distance measure $D(\cdot,\cdot)$ that 
satisfies 
$D(\rho,\sigma)\geq \sum_\mu p_\mu 
D(\mathcal{E}_\mu(\rho),\mathcal{E}_\mu(\sigma))$, 
such as quantum relative entropy $S(\rho||\sigma)$ \cite{vedralplenio}. 
For selective commuting operations, i.e., $[\mathcal{E}_\mu,\lambda]=0$ for all $\mu$:
\begin{eqnarray}
 \tilde{\mathfrak{D}}(\rho) &:=& D(\rho,\lambda(\rho)) \nonumber\\
 &\geq& \sum_\mu p_\mu D(\mathcal{E}_\mu(\rho),\mathcal{E}_\mu(\lambda(\rho))) \nonumber\\
 &=& \sum_\mu p_\mu D(\mathcal{E}_\mu(\rho),\lambda(\mathcal{E}_\mu(\rho)))\nonumber\\
 &\equiv& \sum_\mu p_\mu \tilde{\mathfrak{D}}(\mathcal{E}_\mu(\rho)),
\end{eqnarray}
where the second line follows from the given property of $D$, and the third line follows from the selective commuting condition.
Thus, strong monotonicity holds for the simple measure $\tilde{\mathfrak{D}}$ introduced in Eq.\ (4) with proper $D$, e.g., the relative entropy between a state and its resource-destroyed counterpart, under selective commuting operations.

\section{COHERENCE}\label{appa}

Here we present a detailed analysis of the application of the theory of
resource destroying maps to the resource theory of coherence. 
We first analyze the comparative power of coherence-free classes defined by the theory of $\Pi$, namely $\bar{X}(\Pi)$, $\bar{X}^*(\Pi)$ and $X(\Pi)$ and their selective counterparts $\bar{X}_s(\Pi)$, $\bar{X}_s^*(\Pi)$ and $X_s(\Pi)$. We give some new examples of operations that exhibit characteristic behaviors in this theory.
Note that $\bar{X}_s(\Pi)$ and $X_s(\Pi)$ are respectively equivalent to Incoherent Operations (IO) \cite{baumgratz} and Strictly Incoherent Operations (SIO) \cite{sio} in literature. $\bar{X}(\Pi)$, namely coherence nongenerating operations, is recently studied by one of the authors \cite{hung}. In the preparation of this work, we became aware that $X(\Pi)$ is independently studied as Dephasing-covariant Incoherent Operations (DIO) in Refs.\ \cite{speakable,examination}. 
Notably, Ref.\ \cite{examination} argued that this class is a maximal extension of the ``physically consistent'' class, which is another interesting interpretation of $X(\Pi)$.
The structure of this theory is illustrated in Fig.\ \ref{cohfig}.
For comparison, we also briefly discuss about some coherence-free definitions arising from other scenarios, including Translationally Invariant Operations (TIO) \cite{time,speed} and Genuinely Incoherent Operations (GIO) \cite{genuine,powerg}. In this section, $\{|i\rangle\langle i|\}$ is the incoherent basis. 

\begin{figure}[H] 
\centering
\includegraphics[width=0.36\columnwidth]{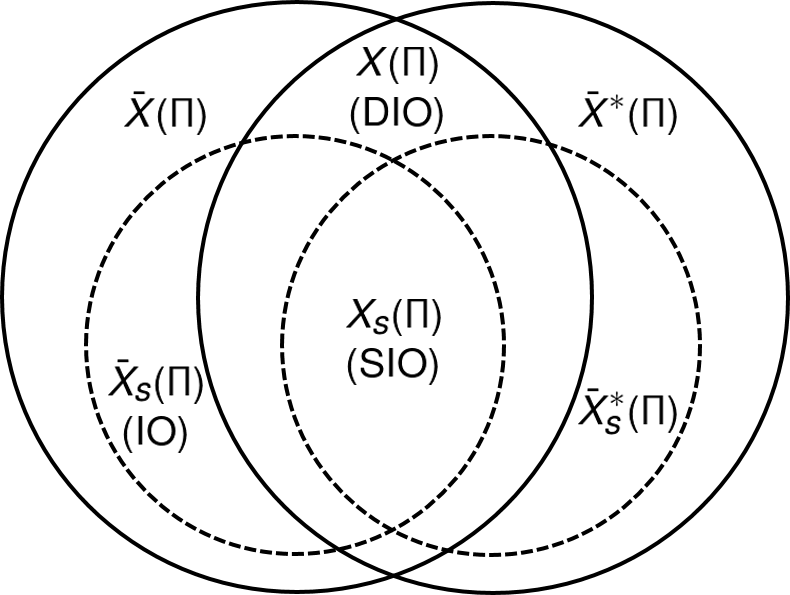}
\caption{\label{cohfig}A Venn diagram of coherence-free operations arising from the theory of resource destruction.}
\end{figure}

\subsection{$\bar{X}_s(\Pi)$ and $X(\Pi)$ are incomparable}
We first show that neither $\bar{X}_s(\Pi)$ and $X(\Pi)$ contains the other by constructing quantum operations belonging to $\bar{X}_s(\Pi)\backslash X(\Pi)$ and $X(\Pi)\backslash\bar{X}_s(\Pi)$. To achieve this, we derive the conditions on the entries of Kraus operators for operations in $\bar{X}_s(\Pi)$ and $X(\Pi)$. For an operation in $\bar{X}_s(\Pi)$, there is a Kraus decomposition such that 
\begin{equation}
   K^{(\mu)}_{ki}K^{(\mu)*}_{li}=0,\forall k\neq l
\end{equation}
for all Kraus operators $K^{(\mu)}$, where $K^{(\mu)}_{ki}=\langle{k}|K^{(\mu)}|{i}\rangle$. On the other hand, an operation in $X(\Pi)$ requires the summation \begin{equation}
   \label{Eq:Pi_Kraus}\sum_\mu K^{(\mu)}_{ki}K^{(\mu)*}_{li}=0,\forall k\neq l.
\end{equation}

Consider a qubit operation $\mathcal E_1(\rho)=K_1(\rho)K_1^\dagger+K_2(\rho)K_2^\dagger$ with $K_1=|0\rangle\langle +|$ and $K_2=|1\rangle\langle-|$. $K_1$ and $K_2$ are both incoherent so $\mathcal E_1\in\bar{X}_s(\Pi)$. However, it can be checked that $\mathcal E_1(\Pi(|+\rangle\langle+|))=I/2$ but $\Pi(\mathcal E_1(|+\rangle\langle+|))=|0\rangle\langle0|$, thus $\mathcal E_1\in \bar{X}_s(\Pi)\backslash X(\Pi)$. Since $\bar{X}_s(\Pi)\subset \bar{X}(\Pi)$, this operation is also an example of $\bar{X}(\Pi)\backslash X(\Pi)$.

Next consider a qutrit operation $\mathcal E_2$ with Kraus operators
\begin{equation}
\begin{array}{ccc}
 K_1=\left(\begin{array}{ccc}x_1&0&0\\
0&a&0\\
0&-b&0\end{array}\right),   & 

K_2=\left(\begin{array}{ccc}0&0&x_2\\
0&b^*&0\\
c^*&a^*&0\end{array}\right),   &
K_3=\left(\begin{array}{ccc}0&0&0\\
0&0&x_3\\
a&-c&0\end{array}\right),
\end{array}
\end{equation}
where the parameters satisfy $|x_1|^2+|c|^2+|a|^2=2|a|^2+2|b|^2+|c|^2=|x_2|^2+|x_3|^2=1$. It can be checked that any linear combination of the three Kraus operators is not incoherent. Since any other Kraus decomposition $\{M_i\}_{i=1}^d$ of $\mathcal{E}_2$ is related to $\{K_1,K_2,K_3\}$ by a $d$-dimensional unitary transformation $[u_{ij}]_{i,j=1}^d$ as $M_i=\sum_{j=1}^3u_{ij}K_j$, (and hence $M_i$ are not incoherent), we conclude that $\mathcal E_2\not\in\bar{X}_s(\Pi)$. Meanwhile, $\{K_1,K_2,K_3\}$ satisfy Eq.\ (\ref{Eq:Pi_Kraus}), thus $\mathcal E_2\in X({\Pi})\backslash\bar{X}_s(\Pi)$.

We also note that $\bar{X}_s(\Pi)\cap X(\Pi)$ is not empty, because $X_s(\Pi)$ (studied in \cite{sio}) is a subset of both. So $\bar{X}_s(\Pi)$ and $X(\Pi)$ are incomparable but not disjoint.

\subsection{Incoherent-measure-and-prepare operations are in $\bar{X}_s^\ast(\Pi)$}
By previous results and references cited in the introduction, 
we already have a full characterization of the coherence 
nongenerating and $\Pi$-commuting classes. The question
of non-activation has not been studied before, however.
Accordingly, we exhibit here a class of operations belonging 
to $\bar{X}_s^\ast(\Pi)$ (and thus $\bar{X}^\ast(\Pi)$). Some of them are able to generate coherence, so we also have examples of $\bar{X}_s^\ast(\Pi)\backslash X(\Pi)$.

Consider the following type of operations such that the 
Kraus operators take the form $K_i=|f_i\rangle\langle i|$. 
Such operations represent the following measure-and-prepare procedure: 
one first performs a projective measurement in the incoherent basis, 
and then prepares the system the corresponding state $|f_i\rangle$ 
upon measuring $i$. Such operations are entanglement 
breaking when acting locally \cite{entbreak}.
Here we show that they belong to $\bar{X}_s^\ast(\Pi)$.  Let $\mathcal{E}^{\rm MP}$ be such an operation. 
Notice that the measuring step of $\mathcal{E}^{\rm MP}$ destroys coherence, so $\mathcal{E}^{\rm MP}\circ\Pi=\mathcal{E}^{\rm MP}$ automatically holds. More explicitly, for any $\rho$,
\begin{eqnarray}
 \mathcal{E}^{\rm MP}(\Pi(\rho))&=&\sum_{ij} |f_i\rangle\langle i|j\rangle\langle j|\rho|j\rangle\langle j|i\rangle\langle f_i| \nonumber\\
 &=& \sum_i |f_i\rangle\langle i|\rho|i\rangle\langle f_i| = \mathcal{E}^{\rm MP}(\rho),
\end{eqnarray}
where we used $\langle i|j \rangle=\delta_{ij}$ for the second equality.
The same condition also holds for each Kraus arm. 
So $\mathcal{E}^{\rm MP}(\rho)\in\bar{X}_s^\ast(\Pi)$. 
Now suppose there exists some $i$ such that $|f_i\rangle$ is coherent: then the operation ceases to be coherence nongenerating (simply take $|i\rangle$ as the input). Therefore, such operations reside in $\bar{X}_s^\ast(\Pi)\backslash X(\Pi)$.

\subsection{Other definitions of coherence-free operations}
Besides those derived from coherence destruction $\Pi$ as shown above, there are also some other proposals of coherence-free operations arising from different contexts.
Two notable ones are Translationally Invariant Operations (TIO) \cite{time,speed} and Genuinely Incoherent Operations (GIO) \cite{genuine,powerg}. However, we argue that both of them are not theories of coherence with respect to a specified observable in a precise sense. 

TIO naturally arises from the asymmetry theory \cite{refrmp,asymgour}, since coherence can be viewed as asymmetry relative to time translations generated by some preferred Hamiltonian \cite{time,speed}.
An operation $\mathcal E^{\rm TI}$ is said to be translationally-invariant with respect to a Hamiltonian $H$ if it satisfies
\begin{equation}
   \mathcal E^{\rm TI}(e^{-iHt}\rho e^{iHt})=e^{-iHt}\mathcal E^{\rm TI}(\rho) e^{iHt},\forall t,
\end{equation}
for any state $\rho$. It turns out that the power of TIO depends on whether $H$ exhibits degeneracy or not. For general $H$, it is possible to generate coherence within the decoherence-free subspaces using TIO, so this class is technically not even contained in the maximal class $\bar{X}(\Pi)$. However, when $H$ has a nondegenerate spectrum, the resulting class ${\rm TIO}^\ast$) defines more precisely a theory of coherence with respect to the eigenbasis of $H$.  It can be shown that ${\rm TIO}^\ast\subset X_s(\Pi)$ \cite{sio} (earlier Ref.\ \cite{speed} showed that ${\rm TIO}^\ast\subset \bar{X}_s(\Pi)$).

The concept of GIO is proposed in order to remove the dependence of incoherence on specific experimental realizations. In other words, the Kraus arms are unable to create coherence for \emph{all} Kraus decompositions of a GIO, or equivalently, all Kraus operators are diagonal in the incoherent basis \cite{genuine}.
A consequence is that all incoherent states are invariant under GIO.
Therefore, in some sense, GIO represents a theory with more constraints in addition to those imposed by incoherence since it cannot even achieve transformations among incoherent (free) states.
Indeed, it is known that $\mathrm{GIO}\subset \mathrm{TIO}^\ast$ \cite{speed} and $\mathrm{TIO}^\ast\subset X_s(\Pi)$ \cite{sio}: $\mathrm{GIO}$ is strictly weaker than the weakest class given by the theory of $\Pi$. We include a more intuitive and straightforward proof of $\mathrm{GIO}\subset X_s(\Pi)$ here.
Let $\mathcal{E}^{\rm GI}$ be a GIO.  
Suppose a Kraus decomposition of $\mathcal{E}^{\rm GI}$ reads $\mathcal{E}^{\rm GI}(\rho)=\sum_\mu K_\mu \rho K^\dagger_\mu$ and $\mathcal{E}^{\rm GI}_\mu(\rho):=K_\mu \rho K^\dagger_\mu$. By Theorem 1 of Ref.\ \cite{genuine}, $K_\mu$ is diagonal in the incoherent basis. In other words, $K_\mu|i\rangle=\alpha_{i\mu}|i\rangle$ for all $i,\mu$, where $\alpha_{il}$ is some constant.
 Notice that any state $\rho$ can be written in the form of
\begin{equation}
   \rho=\Pi(\rho)+\sum_{j \neq i}|i\rangle\langle i|\rho|j\rangle\langle j|,
\end{equation} 
which separates the diagonal and off-diagonal parts. By linearity of $\mathcal{E}^{\rm GI}_\mu$,
\begin{eqnarray}
   \mathcal{E}^{\rm GI}_\mu(\rho)&=&\mathcal{E}^{\rm GI}_\mu(\Pi(\rho))+\sum_{j \neq i} K_\mu|i\rangle\langle i|\rho|j\rangle\langle j|K^\dagger_\mu\nonumber\\
   &=&\mathcal{E}^{\rm GI}_\mu(\Pi(\rho))+\sum_{j \neq i} \alpha_{i\mu}\alpha_{j\mu}^\ast|i\rangle\langle i|\rho|j\rangle\langle j|.\label{lin}
\end{eqnarray} 
Notice that the off-diagonal parts remain off-diagonal, which are erased by a following $\Pi$. That is, $\Pi(\mathcal{E}^{\rm GI}_\mu(\rho))=\Pi(\rho)=\mathcal{E}^{\rm GI}_\mu(\Pi(\rho))$ for all $\rho$, i.e., $\mathcal{E}^{\rm GI}_\mu\circ\Pi=\Pi\circ\mathcal{E}^{\rm GI}_\mu$.
So $\mathrm{GIO}\subseteq X_s(\Pi)$.
To see that the containment is proper, consider an erasure channel that maps everything to $|0\rangle\langle 0|$, where $|0\rangle$ is an incoherent basis state. This channel obviously belongs to $X_s(\Pi)$. But it is not a GIO since it does not leave incoherent states invariant except $|0\rangle\langle 0|$.

\section{DISCORD}\label{appb}

It is much more difficult to study discord-free conditions since the discord destroying map $\pi$ (which acts locally) is dependent on the input and not uniquely defined within degenerate subspaces.
To obtain some preliminary understandings of this $\pi$ theory, we examine the power of some of the most typical quantum operations acting locally on the same subsystem as $\pi$ ( without loss of generality, subsystem $A$). 
We show that local unitary-isotropic channels (mixture of some unitary channel and depolarization, or unitary with white noise) exhibit the strongest classicality: they belong to both $X_A(\pi_A)$ and $X_{s,A}(\pi_A)$. Nevertheless, rank-one projective measurements fail to be nonactivating: they reside in $\bar{X}_{s,A}(\pi_A)\backslash X_A(\pi_A)$. In addition, a peculiar feature that distinguishes nonlinear resource destroying maps from linear ones is that selective classes are not necessarily contained in their original counterparts. We confirm this in the $\pi$ theory by showing that $\bar{X}_{s,A}(\pi_A)\backslash \bar{X}_A(\pi_A)$ contain certain qudit ($d>2$) mixed-unitary channels.
We also provide a measure-and-prepare protocol that is able to generate but not activate discord. However, this protocol does not represent a channel since it depends on the eigenbasis of the input.

In this section, we follow the notations used in the main text: $\rho_{AB}$ is an arbitrary bipartite state, $\rho_A=\sum_i p_i|i\rangle\langle i|$ ($\{|i\rangle\langle i|\}$ diagonalizes the reduced density operator of $A$).

\subsection{Unitary-isotropic channels are in $X_A(\pi_A)$ and $X_{s,A}(\pi_A)$}
Unitary-isotropic channels take the form $\tilde{u}^\gamma(\rho)=(1-\gamma)U\rho U^\dagger+\gamma I/d$, where $U$ is unitary, $\gamma$ characterizes the degree of depolarization ($\gamma\in[0,d^2/(d^2-1)]$ so that $\tilde{u}^\gamma$ is completely positive \cite{2016arXiv161007504B}), and $d$ is the dimension of the Hilbert space.
Unitary channels ($\gamma=0$) and depolarizing channels ($U=I$) are special cases of unitary-isotropic channels.

On the one hand, $\pi_A$ is a local measurement in the $\{|i\rangle\langle i|\}$ basis, so
\begin{eqnarray}
\pi_A(\rho_{AB}) &=& \sum_{i} |i\rangle_A\langle i| \rho_{AB} |i\rangle_A\langle i|,\\
   (\tilde{u}^\gamma_A \otimes I_B)(\pi_A(\rho_{AB}))&=&(1-\gamma)U_A\left(\sum_{i} |i\rangle_A\langle i| \rho_{AB} |i\rangle_A\langle i|\right) U^\dagger_A +  \frac{\gamma}{d_A}I_A\otimes\rho_B\nonumber\\
   &=& (1-\gamma)\sum_i U_A|i\rangle_A\langle i|U_A^\dagger\otimes\langle i|\rho_{AB}|i\rangle+\frac{\gamma}{d_A}I_A\otimes\rho_B,
\end{eqnarray}
where ${\rm tr}\langle i|\rho_{AB}|i\rangle=p_i$.

On the other hand,
 \begin{equation}
    (\tilde{u}^\gamma_A\otimes I_B)(\rho_{AB})=(1-\gamma)(U_A\otimes I_B)\rho_{AB}(U_A^\dagger\otimes I_B)+\frac{\gamma}{d_A} I_A\otimes\rho_B.
 \end{equation}
 Notice that ${\rm tr}_B(\tilde{u}^\gamma_A \otimes I_B)(\rho_{AB})=\sum_i ((1-\gamma)p_i+\gamma/d) U_A|i\rangle_A\langle i|U_A^\dagger $, so $\{U|i\rangle\langle i| U^\dagger\}$ is a new eigenbasis, which implies that
  \begin{eqnarray}
    \pi_A((\tilde{u}^\gamma_A\otimes I_B)\rho_{AB})&=&(1-\gamma)\sum_i U_A|i\rangle_A\langle i|U_A^\dagger U_A\rho_{AB}U_A^\dagger U_A|i\rangle_A\langle i|U_A^\dagger+\frac{\gamma}{d_A}\sum_i U_A|i\rangle_A\langle i|U_A^\dagger I_A U_A|i\rangle_A\langle i|U_A^\dagger\otimes\rho_B\nonumber\\
    &=& (1-\gamma)\sum_i U_A|i\rangle_A\langle i|U_A^\dagger\otimes\langle i|\rho_{AB}|i\rangle+\frac{\gamma}{d_A}I_A\otimes\rho_B\nonumber\\
    &=&(\tilde{u}^\gamma_A\otimes I_B)(\pi_A(\rho_{AB})).
 \end{eqnarray}
 That is, $(\tilde{u}^\gamma_A\otimes I_B)\circ \pi_A =\pi_A \circ (\tilde{u}^\gamma_A\otimes I_B)$: $\tilde{u}^\gamma\in X_A(\pi_A)$.

 Now notice that applying Heisenberg-Weyl operators uniformly at random (Heisenberg-Weyl twirling) on any qudit gives the maximally mixed state \cite{wilde}. This indicates that unitary-isotropic channels admit a Kraus decomposition by unitaries (belong to mixed-unitary channels). More explicitly,
 \begin{equation}
     \tilde{u}^\gamma(\rho)=(1-\gamma)U\rho U^\dagger +\frac{\gamma}{d^2}\sum_{i,j=0}^{d-1}\mathsf{X}^i\mathsf{Z}^j\rho{\mathsf{Z}^\dagger}^j{\mathsf{X}^\dagger}^i \equiv \left(1-\gamma\frac{d^2-1}{d^2}\right)U\rho U^\dagger + \frac{\gamma}{d^2}\sum_{\substack{i,j=0\\i+j\neq 0}}^{d-1}U\mathsf{X}^i\mathsf{Z}^j\rho{\mathsf{Z}^\dagger}^j{\mathsf{X}^\dagger}^i U^\dagger,
 \end{equation}
 where $\mathsf{X}$ and $\mathsf{Z}$ are generalized Pauli operators acting unitarily as $\mathsf{X}|j\rangle=|{j+1}\mod d\rangle$ (cyclic shift) and $\mathsf{Z}|j\rangle=e^{i2\pi j/d}|j\rangle$ (phase). $\mathsf{X}^i\mathsf{Z}^j$ for $i,j=0,\cdots, d-1$ are Heisenberg-Weyl operators. Since unitaries belong to $X_A(\pi_A)$, we conclude that $\tilde{u}^\gamma\in X_{s,A}(\pi_A)$.

\subsection{Rank-one projective measurements are in $\bar{X}_{s,A}(\pi_A)\backslash X_A(\pi_A)$}

Now consider a local projective measurement in the basis $\{|\psi_j\rangle\langle\psi_j|\}$, denoted by $\Psi$. This operation is obviously commutativity-preserving (thus in $\bar{X}({\pi})$) since its output is always diagonal in the specified basis. In fact, each projection is trivially commutativity-preserving, so $\Psi\in\bar{X}_{s,A}(\pi_A)$.
    
    Then we consider if $\Psi_A$ always commutes with $\pi_A$. On the one hand,
    \begin{equation}
    (\Psi_A\otimes I_B)(\pi_A(\rho_{AB}))=
    \sum_{i,j}|\psi_j\rangle_A\langle\psi_j|\otimes\langle\psi_j|i\rangle_A\langle{i}|\rho_{AB}|i\rangle_A\langle{i}|\psi_j\rangle.
    \label{d1}
    \end{equation}
   On the other hand,
    \begin{equation}
      (\Psi_A\otimes I_B)(\rho_{AB})= \sum_{j}|\psi_j\rangle_A\langle\psi_j|\otimes\langle\psi_j|\rho_{AB}|\psi_j\rangle,\label{d2}
    \end{equation}
    which is already classical-quantum, so $\pi_A((\Psi_A\otimes I_B)(\rho_{AB}))=(\Psi_A\otimes I_B)(\rho_{AB})$. 
    Since $\rho_{AB}$ is arbitrary, the right hand sides of Eqs.\ (\ref{d1}) and (\ref{d2}) always coincide if and only if $\{|i\rangle\}=\{|\psi_i\rangle\}$. (The equality always holds if we restrict to classical-quantum inputs so that $|i\rangle_A\langle{i}|\otimes I_B$ and $\rho_{AB}$ commute.)
    This indicates that $\Psi_A$ do not commute with $\pi_A$ when $\Psi_A$ is not an eigenbasis of $\rho_A$. Therefore, $\Psi\in\bar{X}_{s,A}(\pi_A)\backslash X_A(\pi_A)$. 
    
\subsection{Some mixed-unitary channels are in $\bar{X}_{s,A}(\pi_A)\backslash \bar{X}_A(\pi_A)$}   
   By previous results, mixed-unitary channels belong to all selective classes. Here we argue that certain mixed-unitary channels live outside $\bar{X}_A(\pi_A)$, thus confirming that selective conditions are not necessarily stronger than their original versions in theories with nonlinear resource destroying maps.
   
   
   When $A$ is a qubit, $\bar{X}_A(\pi_A)$ (commutativity-preserving channels) is composed of unital channels, which is known to be equivalent to mixed-unitary channels (qubit quantum Birkhoff theorem) \cite{watrous}, and semiclassical (SC) channels (the outputs of a semiclassical channel are diagonal in the same basis) \cite{local,hucpc}. Therefore, the following classes collapse:
   \begin{equation}
      \bar{X}_A(\pi_A)\backslash{\rm SC} = \text{Mixed-unitary} = \text{Unital}.
   \end{equation}
   
   When $A$ is a qudit with dimension $d>2$, however, the above classes form a strict hierarchy. In this case, $\bar{X}_A(\pi_A)\backslash{\rm SC}$ are composed of isotropic channels (which takes the form $(1-\gamma)\Gamma({\rho})+\gamma I/d$, where $\Gamma$ is either unitary or antiunitary) \cite{hucpc,guohou}, which belong to mixed-unitary channels (unitary-isotropic case: by previous results; antiunitary-isotropic case: Ref.\ \cite{2016arXiv161007504B}). However, since $\Gamma$ preserves eigenvalues, $\mu$ must isotropically ``shrink'' the spectrum towards uniformity by degree $\lambda$ (which does not distinguish between reference frames, as its name suggests).  For example, given any two pure states as inputs, the respective outputs of an isotropic channel must have the same spectrum.  Therefore, the property of isotropy places a strong uniformity requirement on the mixtures of unitaries.  
    An example of anisotropic mixed-unitary channels is given as follows. Let $\mu(\rho)=\mathsf{X}\rho\mathsf{X^\dagger}/2+\mathsf{X^2}\rho\mathsf{{X^\dagger}^2}/2$ be a qutrit mixed-unitary channel, where $\mathsf{X}$ is the cyclic shift generalized Pauli operator as defined earlier. 
   Define $|\chi\rangle:=(|0\rangle+|1\rangle+|2\rangle)/\sqrt{3}$, which is an eigenstate of both $\mathsf{X}$ and $\mathsf{X}^2$. Then
   \begin{equation}
    \mu(|\chi\rangle\langle\chi|)=|\chi\rangle\langle\chi|,
    \end{equation}
    which remains a pure state, but
    \begin{equation}
    \mu(|0\rangle\langle 0|)=\frac{1}{2}|1\rangle\langle 1|+\frac{1}{2}|2\rangle\langle 2|,
   \end{equation}
   which is mixed. So $\mu$ is not an isotropic channel. Moreover, it is well known that there exist qudit unital channels that is not mixed-unitary (no quantum Birkhoff theorem) \cite{watrous,shorppt}. In conclusion, for $d>2$,
      \begin{equation}
      \bar{X}_A(\pi_A)\backslash{\rm SC} \subsetneq \text{Mixed-unitary} \subsetneq \text{Unital}.
   \end{equation}
Since $\mu(|\chi\rangle\langle\chi|)$ and $\mu(|0\rangle\langle 0|)$ clearly do not commute, $\mu$ is also not semiclassical. So $\mu\not\in\bar{X}_A(\pi_A)$.
 
\subsection{A measure-and-prepare map}

Lastly, we define the following measure-and-prepare protocol that generates discord, but cannot activate it. Given an input $\rho_{AB}$, consider the measure-and-prepare operation with Kraus operators $K_i=|g_i\rangle\langle i|$ (recall that $\{|i\rangle\}$ diagonalizes $\rho_A$), but there exists $j\neq k$ such that $\langle g_j|g_k\rangle\neq0$. Let $\xi$ be such a measure-and-prepare map.  By definition, it is not commutativity-preserving, thus able to create discord. However, for any $\rho_{AB}$,
\begin{eqnarray}
 (\xi_A\otimes I_B)(\pi_A(\rho_{AB}))&=&\sum_{ij} |g_i\rangle_A\langle i|j\rangle_A\langle j|\rho_{AB}|j\rangle_A\langle j|i\rangle_A\langle g_i| \nonumber\\
 &=& \sum_i |g_i\rangle_A\langle i|\rho_{AB}|i\rangle_A\langle g_i| =  (\xi_A\otimes I_B)(\rho_{AB}),
\end{eqnarray}
where we used $\langle i|j\rangle=\delta_{ij}$ for the second equality.
That is, $\xi_A\otimes I_B=(\xi_A\otimes I_B)\circ\pi_A$. So $\xi$ is nonactivating.
Note that the above protocol is not linear.
It remains an open question as to whether there are quantum channels in $\bar{X}^\ast_A(\pi_A)\backslash X_A(\pi_A)$.

\end{document}